\newcommand\numberthis{\addtocounter{equation}{1}\tag{\theequation}}
\newtheorem{prop}{Proposition}
\DeclareMathOperator*{\argmax}{arg\,max}
\begin{document}

\linespread{1.0}
\title{Low Complexity Coefficient Selection Algorithms for Compute-and-Forward  }
\author{Qinhui Huang and Alister Burr
\thanks{The work described in this paper was supported in part by UK EPSRC under grant EP/K040006.\par  The authors are with the Department of Electronics, University of York, York
YO10 5DD, U.K. (e-mail: qh529@york.ac.uk; alister.burr@york.ac.uk) }}

\maketitle

\begin{abstract}
Compute-and-Forward (C\&F) has been proposed as an efficient strategy to reduce the backhaul load for the distributed antenna systems. Finding the optimal coefficients in C\&F has commonly been treated as a shortest vector problem (SVP), which is N-P hard. The point of our work and of Sahraei's recent work is that the C\&F coefficient problem can be much simpler. Due to the special structure of C\&F, some low polynomial complexity optimal algorithms have recently been developed. However these methods can be applied to real valued channels and integer based lattices only. In this paper, we consider the complex valued channel with complex integer based lattices. For the first time, we propose a low polynomial complexity algorithm to find the optimal solution for the complex scenario. Then we propose a simple linear search algorithm which is conceptually suboptimal, however numerical results show that the performance degradation is negligible compared to the optimal method. Both algorithms are suitable for lattices over any algebraic integers, and significantly outperform the lattice reduction algorithm. The complexity of both algorithms are investigated both theoretically and numerically. The results show that our proposed algorithms achieve better performance-complexity trade-offs compared to the existing algorithms.

\end{abstract}
\begin{keywords}
Compute-and-Forward; algebraic integers; shortest vector problem
\end{keywords}


\section{Introduction}
Due to their very high density, the next generation of wireless communication systems will require enormous backhaul load to support the data transmission between the access points and the central hub station. Physical layer network coding (PNC) \cite{Zhang.2006} has been proposed as a promising strategy to reduce the backhaul load. Among many PNC schemes, compute and forward (C\&F), as proposed in \cite{Nazer.2011} has attracted the most interest. It employs a structured lattice code for PNC. Each relay infers and forwards a linear combination of the transmitted codewords of all users. The lattice structure ensures the combination of the codewords is a codeword itself; hence cardinality expansion is avoided. Additionally, the abundant members of the ``lattice family" brings more flexibility to PNC.\par
The key aspect which dominates the performance of C\&F is the selection of the coefficient vectors. The process of selecting the optimal coefficients consists of two stages:
\begin{itemize}
\item local selection: each relay selects an integer vector to maximise its computation rate (achievable rate region) locally.
\item global selection: in order to recover the data without ambiguity, the vectors provided by the relays have to form a matrix whose rank is at least the number of sources.   
\end{itemize}
\par
Much work has been carried out in the last few years on both stages. For the local selection, the original paper of C\&F \cite{Nazer.2011} provided a bound for the coefficient vectors, and the optimal solution can be obtained by performing an exhaustive search within that boundary. The authors in \cite{Feng.2013} stated that the coefficient selection issue is actually a shortest vector problem (SVP). Any lattice reduction algorithm, such as the Lenstra-Lenstra-Lovasz (LLL) algorithm \cite{LLL.1982} and the Fincke-Pohst algorithm \cite{Fincke.1985} can be utilised to acquire the sub-optimal solution. There are two main drawbacks of these lattice reduction algorithms: 1) the complexity increases exponentially as the number of user terminals increases. 2) it becomes less accurate for large numbers of users. In 2014, a polynomially optimal algorithm proposed by Sahraei and Gastpar \cite{Gastpar.2014} significantly reduced the number of candidate vectors of \cite{Nazer.2011}. It translated the optimisation problem over multiple variables to one variable. Based on the idea of \cite{Gastpar.2014}, some improvements are proposed in \cite{Wen.2016, Qinhui.2016} to further reduce the complexity.   
\par 
Unfortunately, the methods in \cite{Gastpar.2014, Wen.2016, Qinhui.2016} are suitable for real valued channels and integer lattices ($\mathbb{Z}$-lattice) only. Finding the optimal solution in polynomial time over complex integer based lattices is still an open problem. For the Gaussian integer\footnote{Gaussian integers are complex numbers whose real and imaginary parts are both integers.} ($\mathbb{Z}[i]$) based lattices, the sub-optimal lattice reduction based algorithms: such as the complex-LLL  \cite{CongLing.2009} and its extensions \cite{Stewart.2001}, \cite{Cohen.2013} still work. However, they have the same drawbacks as in the real channel scenarios. Recently, much focus was given to the Eisenstein integer\footnote{Eisentein integers are complex numbers of the form
$c=a+b\omega$ where a and b are integers and $\omega=\frac{1}{2}(-1+\sqrt{3}i)$} ($\mathbb{Z}[\omega]$) based lattice: which has the densest packing strcuture in the 2-dimensional complex plane \cite{Tunali.2015, Wang.2015, Qifu.2013}. A lattice reduction method over the $\mathbb{Z}[\omega]$-lattice is proposed in \cite{Qifu.2013}, though for a two way relay system only. An extended version of the algorithm in \cite{Gastpar.2014} for both $\mathbb{Z}[i]$ and $\mathbb{Z}[\omega]$ is proposed in \cite{Ling.2016}, however it might miss the optimal solution sometimes. The latest research in \cite{Huang.2015} illustrated that the C\&F can be operated over many algebraic number fields (not only restricted to Gaussian and Eisenstein integers). Unfortunately, efficient approaches for coefficient selection over these non-cubic lattices are not available in the existing literature.\par 
For the second stage, the most commonly used approach to meet the requirement of unambiguous decodability is: each relay forwards more than one linear equation to the hub. The global optimal full rank matrix is selected by the hub and then fedback to the relays \cite{Liliwei.2012}. An alternative approach is that the integer vector provided by each relay is forced to include at least two users. This can significantly reduce the possibility of rank deficiency \cite{Molu.2016}. \par     
Distributed massive MIMO (or cell free massive MIMO)\cite{Cellfree.C, Cellfree.J} is probably the most promising application of C\&F. It deploys many more access points than user terminals. By exploiting the ``redundant" relays, the rank deficiency is not a big issue even if each relay forwards only the locally best equation without feedback\cite{Qinhui.2017}. Therefore in this paper, we focus on choosing the local optimal coefficient since it plays a fundamental role in the entire process of C\&F. The main contributions of this paper are as follows:
\begin{itemize}
\item For the first time, we propose a low polynomial complexity algorithm to ensure the optimal integer vector can be acquired for both $\mathbb{Z}[i]$ and $\mathbb{Z}[\omega]$ lattices. We also derive a theoretical upper bound of the complexity.  
\item We propose a suboptimal linear search algorithm for the coefficient selection which has lower complexity. Compared to the optimal approach above, it aims to discard the ``unnecessary" candidates by employing a pre-defined step size which is related to the number of users and SNR. The theoretical complexity is also investigated.   
\item We evaluate the performance and complexity of our proposed two algorithms numerically, and compare them with other existing approaches. Simulation results indicate that our proposed algorithms have better complexity-performance tradeoff. 
\item Our proposed algorithms can be easily extended to the lattices over any other algebraic integers without additional complexity.
\end{itemize} 
\par                      
The rest of this paper is organised as follows. We review the C\&F strategy and some existing selection algorithms as benchmarks in section II. In section III, we propose an optimal search and analyse its complexity. We introduce our linear search method and analyse its complexity in Section IV. In section V, we give the numerical results in terms of both computation rate and complexity for different types of lattices. Conclusions and future work are given in section VI. 

Unless noted, we use plain letters, boldface lowercase letters and boldface uppercase letters to denote scalars, vectors and matrices respectively, and all vectors are column vectors. The sets of real numbers and complex numbers are denoted by $\mathbb{R}$ and $\mathbb{C}$ respectively. We use $\mathbb{Z}$, $\mathbb{Z}[i]$ and $\mathbb{Z}[\omega]$ to represent integers, Gaussian integers and Eisenstein integers respectively. $\mathbb{F}_p$ denotes the finite field of size $p$. $\lfloor\cdot\rceil$, $\lceil\cdot\rceil$, $\lfloor\cdot\rfloor$ denote the round, ceil and floor operations respectively. We use $\|\cdot\|$ to represent the Euclidean norm.  

\section{Preliminaries}
\subsection{Compute and Forward}
We consider a general local optimisation problem in C\&F. As shown in Fig. 1, we assume that $L$ users transmit signals to the relay simultaneously. The original transmitted message of the $l$-th user is denoted as $\mathbf{w}_l\in\mathbb{F}_p^k$, which is a length $k$ vector over GF($p$). By employing a $\frac{k}{n}$-rate lattice encoder, $\mathbf{w}_l$ is mapped to a length $n$ codeword, denoted $\mathbf{x}_{l}\in{\mathbb{C}}^{n}$. Each component of $\mathbf{x}_l$ is drawn from a quotient ring $\mathbb{A}/{\pi}\mathbb{A}$ which is isomorphic to GF($p$). The term $\mathbb{A}$ denotes an integer domain, and usually refers to a principal ideal domain (PID)\footnote{The most commonly used PIDs for complex valued case are Gaussian integers $\mathbb{Z}[i]$ and Eisenstein integers $\mathbb{Z}[\omega]$, hence their respective $\mathbf{x}_{l}$ can be expressed as $\mathbf{x}_{l}\in{(\mathbb{Z}[i]/{\pi}\mathbb{Z}[i])}^{n}$ and $\mathbf{x}_{l}\in{(\mathbb{Z}[\omega]/{\pi}\mathbb{Z}[\omega])}^{n}$.}. The codebook of $\mathbf{x}_{l}$ is defined by a lattice partition of $\Lambda/{\Lambda'}$\footnote{$\Lambda$ and $\Lambda'$ denote the fine lattice and the coarse lattice respectively. Note that $\mathbb{A}/\pi{\mathbb{A}}$ corresponds to symbol, whereas $\Lambda/{\Lambda'}$ corresponds to codeword, their respective cardinalities are $p$ and $p^k$. The design of the codebook is beyond the scope of this paper, see \cite{Feng.2013} for details.}. We use Rayleigh fading $h_{l}\sim{\mathcal{CN}(0,1)}$ to model the channel vector $\mathbf{h}=[h_1,h_2,\cdots,h_L]^T$. The received superimposed signal at the relay can be expressed as
\begin{equation}
\mathbf{y}=\sum_{l=1}^L{h_l\mathbf{x}_l}+\mathbf{z},\ \mathbf{y}\in\mathbb{C}^n,
\end{equation}
 where the noise $\mathbf{z}\sim\mathcal{CN}(0,\sigma^2{\mathbf{I}_{n}})$ is a length $n$ circularly symmetrical complex Gaussian random vector. We assume the power constraint of the codeword is $P$ per symbol, written as $E[\|\mathbf{x}_{l}\|^2]\leq{nP}$. The signal to noise ratio is represented as SNR=$P/{\sigma^2}$. 
\begin{figure}[t!]
\centering
\begin{minipage}[t]{1\linewidth}
\centering
\includegraphics[width=0.9\textwidth]{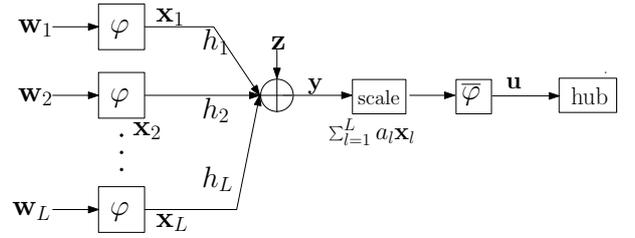}
\end{minipage}
\caption{The system model of compute and forward } \label{fig:buffer1}
\end{figure}

The received signal vector $\mathbf{y}$ is first scaled by a factor $\alpha\in\mathbb{C}$. Each relay attempts to choose an integer linear combination of the transmitted codewords to represent the scaled received signal, written as $\mathcal{Q}_{\Lambda}(\alpha\mathbf{y})=\sum_{l=1}^{L}{a_{l}\mathbf{x}_{l}}$. $\mathcal{Q}_{\Lambda}$ quantises $\alpha\mathbf{y}$ to its closest fine lattice point in $\Lambda$. The quantisation error contributes to the effective noise of C\&F, expressed as 
\begin{equation}
\mathbf{z_{\mathrm{eff}}}=\displaystyle{\sum_{l=1}^L}(\alpha h_l-a_l)\mathbf{x}_l+\alpha\mathbf{z},
\end{equation}
where $a_{l}$ is an integer in $\mathbb{A}$. Let $\mathbf{a}=[a_1,a_2,\cdots,a_L]^T$ to denotes the coefficient vector of the linear function. The scaling factor $\alpha$ aims to force the scaled channel vector $\alpha \mathbf{h}$ to approximate an integer vector $\mathbf{a}$. The effective noise comprises 2 components:
\begin{itemize}
\item Self noise: $\mathbf{z_{\mathrm{self}}}=\sum_{l=1}^{L}(\alpha h_l-a_l)\mathbf{x}_l$: caused by the mismatch between the selected integer vector and the scaled channel.
\item Scaled Gaussian noise $\mathbf{z_{\mathrm{sg}}}=\alpha \mathbf{z}$: the received Gaussian noise is scaled by the scaling factor $\alpha$.
\end{itemize}
For a given coefficient vector $\mathbf{a}$, the achievable computation rate per complex dimension is given as \cite{Nazer.2011}\\
\begin{equation}
\mathcal{R}(\mathbf{h},\mathbf{a})=\displaystyle{\max_{\alpha\in\mathbb{C}}}\log^{+}\Big(\frac{P}{\alpha^2\sigma^2+P\|\alpha\mathbf{h}-\mathbf{a}\|^2}\Big),
\end{equation}
where $\mathrm{log}^{+}(\cdot)=\mathrm{max}(\mathrm{log}(\cdot),0)$, and the term $\alpha^2\sigma^2+P\|\alpha\mathbf{h}-\mathbf{a}\|^2$ is the variance of the effective noise, denoted by $\sigma_{\mathrm{eff}}^2$. The Minimum Mean Square Error (MMSE) solution of $\alpha$ is given by\\
\begin{equation}
\alpha_\mathrm{MMSE}=\frac{\mathrm{SNR}\mathbf{h}^H\mathbf{a}}{1+\mathrm{SNR}\parallel\mathbf{h}\parallel^2}, 
\end{equation}
and hence equation (3) can be rewritten as
\begin{equation}
\mathcal{R}(\mathbf{h},\mathbf{a})=\log^{+}(\frac{1}{\mathbf{a}^{H}\mathbf{M}\mathbf{a}}),
\end{equation}
where $\mathbf{M}=\mathbf{I}_{L}-\frac{\mathrm{SNR}}{\mathrm{SNR}\|\mathbf{h}\|^{2}+1}\mathbf{h}\mathbf{h}^{H}$, and $\mathbf{I}_L$ denotes an $L\times{L}$ identity matrix. 
The target of each relay is to find its local best integer vector $\mathbf{a}$ to maximise the computation rate, expressed as
\begin{equation}
\mathbf{a_{\mathrm{opt}}}=\argmax_{\mathbf{a}\in{\mathbb{A}}^{L}\setminus\{\mathbf{0}\}}\mathcal{R}(\mathbf{h},\mathbf{a}).
\end{equation} 
\subsection{Existing Coefficient Selection Algorithms}
\subsubsection{Exhaustive-I Algorithm}
In the original paper of C\&F\cite{Nazer.2011}, the authors stated that the Euclidean norm of the optimal coefficient vector has an upper bound, written as $\|\mathbf{a}_{\mathrm{opt}}\|\leq\Phi=\sqrt{1+\mathrm{SNR}\|\mathbf{h}\|^2}$, hence an exhaustive search over all possible $\mathbf{a}$ within that range can be employed to obtain $\mathbf{a}_{\mathrm{opt}}$. The time complexity of this algorithm is $\mathcal{O}(\Phi^{2L})$.
\subsubsection{Exhaustive-II Algorithm (Real-valued only)} 
The authors in \cite{Gastpar.2014, Gastpar.2015} proposed an exhaustive search algorithm with polynomial complexity\footnote{In this paper, we focus on the the complex valued case only. Hence some improved versions of this method are omitted here, see \cite{Wen.2016, Qinhui.2016} for details.}. They stated that it suffices to search over the integer vectors generated by $\lfloor\alpha\mathbf{h}\rceil$ only rather than considering all possible $\mathbf{a}$ in $\mathbb{Z}^{L}$. Therefore, the optimisation problem with an $L$-dimensional variable $\mathbf{a}$ is translated to an optimisation problem over the one-dimensional variable $\alpha$. The candidate vectors can be obtained by dividing all possible $\alpha\in\mathbb{R}$ into several intervals, and each interval corresponds to a unique candidate $\mathbf{a}$. The time complexity of this algorithm is $\mathcal{O}(L\Phi\mathrm{log}(L\Phi))$. 
\subsubsection{Lattice Reduction Algorithm} Using lattice reduction based algorithms for coefficient selection was first proposed in \cite{Feng.2013}. As shown in equation (5), maximising the computation rate is equivalent to minimising $\mathbf{a}^{H}\mathbf{M}\mathbf{a}$. The matrix $\mathbf{M}$ can be decomposed as $\mathbf{M}=\mathbf{L}\mathbf{L}^H$ by employing the Cholesky decomposition. Hence the equation (5) can be rewritten as $\mathcal{R}(\mathbf{h},\mathbf{a})=\log^{+}(\frac{1}{\|\mathbf{L}^{H}\mathbf{a}\|^2})$. This is exactly a shortest vector problem (SVP) of an $L$-dimensional lattice generated by $\mathbf{L}^H$. The LLL and Complex-LLL lattice reduction algorithms are most commonly used for dealing with the SVP in $\mathbb{Z}$-lattice and $\mathbb{Z}[i]$-lattice respectively. However, these algorithms only ensure the selected vector is less than $2^{\frac{L-1}{2}}$ times the actual optimal solution. Hence, they become less accurate as the number of users increases.
\subsubsection{Quantised Search}: For $\mathbb{Z}[i]$-lattice, an intuitive approach for coefficient selection is to employ some quantised (sampled) values of $\alpha$ to generate the candidate set of $\mathbf{a}$, expressed as $\mathbf{a}=\mathcal{Q}_{\mathbb{Z}[i]}(\alpha\mathbf{h})$. The question is how to choose the quantiser. Since $\alpha\in\mathbb{C}$, the authors in \cite{Viterbo.2012}\footnote{Actually, the concept of utilising $\mathcal{Q}(\alpha\mathbf{h})$ instead of $\mathbf{a}$ was first proposed in \cite{Viterbo.2012}. However, rigorous prove and detailed analysis are not given in \cite{Viterbo.2012}} allocate step sizes for both the magnitude and the phase of $\alpha$. Clearly, this method is equivalent to the exhaustive search when both of the step sizes tend to zero. However, zero step size is definitely infeasible in practice. The core aspect of such a quantised algorithm is the choice of the step size, which is not analysed in \cite{Viterbo.2012}. \par
The method described above leads to an oversampling for the small magnitudes and undersampling for the large magnitudes. In section IV, we will propose an uniform quantiser and describe how to choose the optimal step size.
\subsubsection{L-L Algorithm}: Very recently, Liu and Ling proposed an efficient algorithm (denoted as L-L algorithm) for the complex valued channel in \cite{Ling.2016}. The authors adapted the idea in \cite{Gastpar.2014} directly for the complex integer based lattices. However, the algorithm in \cite{Ling.2016} does not ensure the selected coefficients are optimal for all channel realisations. A detailed discussion of this approach will be presented in section III-$C$.
\section{Exhaustive-II in Complex Valued Channel}
Since the Exhaustive-II selects the optimal coefficients with low polynomial complexity in the real channel case. Hence it is worthwhile to investigate the feasibility of Exhaustive-II in the complex valued channel. This section comprises three parts: we firstly propose the complex exhaustive-II algorithm, followed by the complexity analysis in section $B$, and then a comparison with the L-L method is given in section $C$.
 
\subsection{Complex Exhaustive-II Algorithm}
By substituting $\alpha\mathbf{h}$ for $\mathbf{a}$ in (3), the rate expression becomes
\begin{equation}
\mathcal{R}(\mathbf{h},\alpha)=\mathrm{log}^{+}\Big(\frac{P}{\alpha^2\sigma^2+P\|\alpha\mathbf{h}-\mathcal{Q}_{\mathbb{A}}(\alpha\mathbf{h})\|^2}\Big),
\end{equation} 
where $\alpha\in\mathbb{C}$. Actually, it is not necessary to evaluate $\alpha$ over the whole complex plane.
\begin{prop}
The amplitude of $\alpha_{\mathrm{opt}}$ is upper bounded by $\sqrt{\mathrm{SNR}}$, and it suffices to restrict the phases of $\alpha$ to $0\sim\frac{\pi}{2}$ and $0\sim\frac{\pi}{3}$ for $\mathbb{Z}[i]$-lattice and $\mathbb{Z}[w]$-lattice respectively.
\end{prop}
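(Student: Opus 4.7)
The plan is to split Proposition 1 into two independent claims---an amplitude bound and a phase restriction---and dispatch each by a short structural argument rather than any optimisation calculation.

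For the amplitude bound I would work directly from (7). Because the quantisation term satisfies $P\|\alpha\mathbf{h}-\mathcal{Q}_{\mathbb{A}}(\alpha\mathbf{h})\|^2 \geq 0$, a necessary condition for $\mathcal{R}(\mathbf{h},\alpha)>0$ is
\[
|\alpha|^{2}\sigma^{2} < P, \qquad\text{i.e.,}\qquad |\alpha| < \sqrt{\mathrm{SNR}}.
\]
Since the trivial choice $\mathbf{a}=\mathbf{0}$ always achieves rate $0$, any $\alpha$ with $|\alpha| > \sqrt{\mathrm{SNR}}$ is dominated by $\alpha=0$ and cannot be the maximiser of (6); hence $|\alpha_{\mathrm{opt}}| \leq \sqrt{\mathrm{SNR}}$.

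For the phase restriction I would exploit the unit groups of the two rings: $\mathbb{Z}[i]$ has units $\{\pm 1,\pm i\}$ (the $4$th roots of unity) and $\mathbb{Z}[\omega]$ has units $\{\pm 1,\pm\omega,\pm\omega^{2}\}$ (the $6$th roots of unity). Let $u$ be any such unit. The key step is the equivariance identity
\[
\mathcal{Q}_{\mathbb{A}}(u\alpha\mathbf{h}) \;=\; u\,\mathcal{Q}_{\mathbb{A}}(\alpha\mathbf{h}).
\]
Granted this, both $|u\alpha|^{2}=|\alpha|^{2}$ and
\[
\|u\alpha\mathbf{h}-u\mathcal{Q}_{\mathbb{A}}(\alpha\mathbf{h})\|^{2} \;=\; |u|^{2}\,\|\alpha\mathbf{h}-\mathcal{Q}_{\mathbb{A}}(\alpha\mathbf{h})\|^{2} \;=\; \|\alpha\mathbf{h}-\mathcal{Q}_{\mathbb{A}}(\alpha\mathbf{h})\|^{2}
\]
hold, so replacing $\alpha$ by $u\alpha$ leaves (7) unchanged. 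The complex plane therefore decomposes into $4$ (resp.\ $6$) angular sectors related by the unit rotations, and it suffices to search a single sector, namely phases in $[0,\pi/2]$ for $\mathbb{Z}[i]$ and $[0,\pi/3]$ for $\mathbb{Z}[\omega]$.

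The main obstacle I expect is the equivariance identity itself, which is geometric rather than algebraic. It holds because multiplication by $u$ is an automorphism of the lattice $\mathbb{A}$ and the corresponding rotation (by $\pi/2$ or $\pi/3$) is a symmetry of the Voronoi region, which is a square for $\mathbb{Z}[i]$ and a regular hexagon for $\mathbb{Z}[\omega]$. I would justify it by observing that if $\lambda\in\mathbb{A}$ is the nearest lattice point to some $z\in\mathbb{C}$, then $z-\lambda$ lies in the Voronoi cell, so $uz-u\lambda$ lies in the same rotation-invariant cell; since $u\lambda\in\mathbb{A}$, this forces $u\lambda$ to be the nearest lattice point to $uz$. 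Applying this coordinate-wise to the vector $\alpha\mathbf{h}$ yields the identity and completes the proposition; everything else is routine verification.
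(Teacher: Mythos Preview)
Your proposal is correct and follows essentially the same route as the paper: drop the nonnegative quantisation term in (7) to obtain the amplitude bound, then use the unit group of $\mathbb{A}$ to show $\mathcal{R}(\mathbf{h},\alpha)=\mathcal{R}(\mathbf{h},u\alpha)$ and reduce to one angular sector. The only difference is that you explicitly isolate and justify the equivariance $\mathcal{Q}_{\mathbb{A}}(u\alpha\mathbf{h})=u\,\mathcal{Q}_{\mathbb{A}}(\alpha\mathbf{h})$ via the rotational symmetry of the Voronoi cell, whereas the paper simply writes the resulting chain of equalities without comment; your version is slightly more rigorous but not a different argument.
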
 
\begin{proof}
According to (7), we have
\begin{align*}
\mathcal{R}(\mathbf{h},\alpha)&=\log^{+}\Big(\frac{P}{\alpha^2\sigma^2+P\|\alpha\mathbf{h}-\mathcal{Q}_{\mathbb{A}}(\alpha\mathbf{h})\|^2}\Big)\\
&\leq{\log^{+}(\frac{P}{\alpha^2\sigma^2})}=\log^{+}(\frac{\mathrm{SNR}}{\alpha^2}).\numberthis
\end{align*}
Apparently, the computation rate is zero when $\|\alpha\|\geq{\sqrt{\mathrm{SNR}}}$, where the equality holds iff the selected integer vector matches the scaled channel perfectly. Hence we have an upper bound of $\|\alpha_{\mathrm{opt}}\|<\sqrt{\mathrm{SNR}}$.\par 
Assume $u$ is a unit in $\mathbb{A}$, we have:
\begin{align*}
\mathcal{R}(\mathbf{h},\alpha)&=\log^{+}\Big(\frac{P}{\alpha^2\sigma^2+P\|\alpha\mathbf{h}-\mathcal{Q}_{\mathbb{A}}(\alpha\mathbf{h})\|^2}\Big)\\
&=\log^{+}\Big(\frac{P}{(u\alpha)^2\sigma^2+P\|u\alpha\mathbf{h}-\mathcal{Q}_{\mathbb{A}}(u\alpha\mathbf{h})\|^2}\Big)\\&=\mathcal{R}(\mathbf{h},u\alpha)\numberthis
\end{align*}
Hence the complex plane of $\alpha$ is divided into several ``equivalent regions" due to the existence of units. As the number of units in $\mathbb{Z}[i]$ and $\mathbb{Z}[w]$ are 4 and 6 respectively, it suffices to restrict the phase within $0\sim\frac{2\pi}{4}$ and $0\sim\frac{2\pi}{6}$ respectively.  
\end{proof}
\par
Recall the exhaustive-II search in the real channel case: the range of $\alpha\in\mathbb{R}$ is divided into several intervals. The quantised value $\lfloor\alpha\mathbf{h}\rceil$ is invariant within each interval. Hence each interval corresponds to an unique candidate vector $\mathbf{a}$ (the interval is called the Voronoi region or decision region of its corresponding $\mathbf{a}$), and the candidates can be acquired by choosing a representative of $\alpha$ for each interval. \par 
For the complex channel case, we use $h_{\mathrm{max}}$ to denote the channel coefficient with the largest amplitude in $\mathbf{h}$. Let $\upsilon_{0}$ denote the fundamental region of $\mathbb{A}$, and $\upsilon_{l,a_{l}}$ denote the Voronoi region of $\alpha$ for $\mathcal{Q}_{\mathbb{A}}(\alpha{h_{l}})=a_{l}$. Their respective areas are represented by $\mathcal{A}_{\upsilon_{0}}$ and $\mathcal{A}_{\upsilon_{l}}$. Note that the size of $\upsilon_{l,a_{l}}$ is invariant with different $a_{l}$. Clearly, we have
\begin{equation}
\mathcal{A}_{\upsilon_{l}}=\mathcal{A}_{\upsilon_{0}}/\|h_{l}\|^2,
\end{equation} 
hence we have the following results:
\begin{center}
\begin{figure}[h]
\centerline{\includegraphics[width=0.8\linewidth]{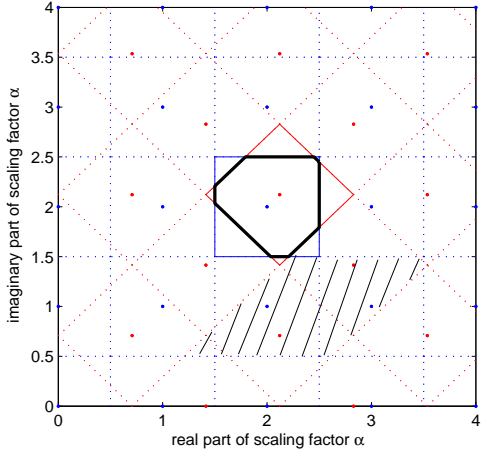}}
\caption{An example with $\mathbb{Z}[i]$: $\mathbf{h}=[1,\ \frac{1+i}{\sqrt{2}}]^T$, $\mathbf{a}=[2+2i,\ 3i]^T$}
\label{fig:buffer1}
\end{figure}
\end{center}

\begin{prop}
The complex plane of $\alpha$ is divided into several convex polygon regions, and each region corresponds to a unique vector $\mathbf{a}$. The area of each region is upper bounded by $\mathcal{A}_{\upsilon_{0}}/\|h_{\mathrm{max}}\|^2$. 
\end{prop}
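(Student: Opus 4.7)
The plan is to describe the partition of the $\alpha$-plane explicitly as the common refinement of $L$ individual quantization partitions, and then use elementary convex-geometry facts to establish convexity, polygonality, and the area bound in one sweep.

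First I would fix $l$ and examine the single-component Voronoi cell $\upsilon_{l,a_{l}}=\{\alpha\in\mathbb{C}:\mathcal{Q}_{\mathbb{A}}(\alpha h_{l})=a_{l}\}$. Because the map $\alpha\mapsto\alpha h_{l}$ is a complex-linear bijection of $\mathbb{C}$ (a rotation together with a uniform scaling by $\|h_{l}\|$), $\upsilon_{l,a_{l}}$ is the preimage, under this bijection, of the translate $a_{l}+\upsilon_{0}$ of the fundamental region. For every algebraic integer lattice of interest the fundamental region $\upsilon_{0}$ is a convex polygon (a square for $\mathbb{Z}[i]$, a regular hexagon for $\mathbb{Z}[\omega]$), so each $\upsilon_{l,a_{l}}$ is itself a convex polygon, and by the change-of-variables formula (equivalently by equation (9)) its area equals $\mathcal{A}_{\upsilon_{0}}/\|h_{l}\|^{2}$.

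Second, I would identify the region of $\alpha$ associated with a given coefficient vector $\mathbf{a}=[a_{1},\ldots,a_{L}]^{T}$ as the intersection
\begin{equation*}
R(\mathbf{a})=\bigcap_{l=1}^{L}\upsilon_{l,a_{l}}.
\end{equation*}
A finite intersection of convex polygons is a convex polygon, which gives the shape claim. Uniqueness follows because on interiors the componentwise quantizer $\mathcal{Q}_{\mathbb{A}}(\alpha h_{l})$ is single-valued, so distinct vectors $\mathbf{a}\neq\mathbf{a}'$ yield regions with disjoint interiors; the union over all $\mathbf{a}\in\mathbb{A}^{L}$ still tiles the entire $\alpha$-plane because every $\alpha$ produces some coefficient vector by quantization.

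Finally the area bound is immediate from $R(\mathbf{a})\subseteq\upsilon_{l,a_{l}}$ for every $l$, which gives
\begin{equation*}
\mathcal{A}(R(\mathbf{a}))\leq\min_{1\leq l\leq L}\mathcal{A}_{\upsilon_{l}}=\min_{1\leq l\leq L}\frac{\mathcal{A}_{\upsilon_{0}}}{\|h_{l}\|^{2}}=\frac{\mathcal{A}_{\upsilon_{0}}}{\|h_{\mathrm{max}}\|^{2}}.
\end{equation*}
I do not expect any deep obstacle; the one nuisance will be the measure-zero boundary of the Voronoi cells, where the quantizer needs a fixed tie-breaking convention so that the partition is unambiguous. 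This is pure bookkeeping and affects neither the convexity argument (closed convex polygons intersect in a closed convex polygon) nor the area estimate. The only step worth double-checking is the Jacobian calculation that underlies (9), since it is the single quantitative input to the bound; everything else is a structural observation about intersections of affine images of $\upsilon_{0}$.
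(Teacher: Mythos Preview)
Your proposal is correct and follows essentially the same route as the paper: write the region for a given $\mathbf{a}$ as the intersection $\bigcap_{l}\upsilon_{l,a_{l}}$, invoke ``intersection of convex polygons is a convex polygon,'' and bound the area by the smallest individual cell $\mathcal{A}_{\upsilon_{0}}/\|h_{\mathrm{max}}\|^{2}$. Your write-up is in fact a bit more careful than the paper's, since you justify explicitly why each $\upsilon_{l,a_{l}}$ is a convex polygon via the linear preimage description and you address the uniqueness/tiling claim and the boundary tie-breaking, which the paper leaves implicit.
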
 
\begin{proof}
For a given candidate vector $\mathbf{a}=[a_1,a_2,\cdots,a_L]^{T}$, the value of $\alpha$ has to meet the conditions of $\mathcal{Q}_{\mathbb{A}}(\alpha\mathbf{h})=\mathbf{a}$ which is equivalent to 
\begin{equation}
\mathcal{Q}_{\mathbb{A}}(\alpha{h_{1}})=a_{1}\cap\mathcal{Q}_{\mathbb{A}}(\alpha{h_{2}})=a_{2}\cdots\cap\mathcal{Q}_{\mathbb{A}}(\alpha{h_{L}})=a_{L},
\end{equation} 
hence the Voronoi region for $\mathcal{Q}_{\mathbb{A}}(\alpha{\mathbf{h}})=\mathbf{a}$, denoted as $\upsilon_{\mathbf{a}}$ is the intersection region of $\upsilon_{l,a_{l}}$ for all $l$. Since the intersection of convex sets is also convex, hence the Voronoi region $\upsilon_{\mathbf{a}}$ is also a convex polygon (the shape of each individual $\upsilon_{l,a_{l}}$ depends on the fundamental region of $\mathbb{A}$: they are square and hexagon for $\mathbb{Z}[i]$ and $\mathbb{Z}[\omega]$ respectively). As an intersection of polygons, the area of $\upsilon_{\mathbf{a}}$ is therefore upper bounded by the smallest size among all $\upsilon_{l,a_l}$ which is $\mathcal{A}_{\upsilon_{0}}/\|h_{\mathrm{max}}\|^2$. 
\end{proof}

We take a simple example to interpret the above proposition. We consider a 2 user system employing the $\mathbb{Z}[i]$-lattice, with the channel vector  $\mathbf{h}=[1,\ \frac{1+i}{\sqrt{2}}]^T$. Since the fundamental region of $\mathbb{Z}[i]$ is square, therefore the shape of each $\upsilon_{l,a_{l}}$ is also square. As shown in Fig. 2, the real and imaginary parts of $\alpha$ are represented by the x-axis and y-axis respectively, and each red (blue) square corresponds to an unique $a_{1}$ ($a_2$) respectively. For example, the red (blue) solid square in the centre denotes $a_1=2+2i$ and $a_2=3i$ respectively. Hence in order to acquire $\mathcal{Q}_{\mathbb{Z}[i]}(\alpha\mathbf{h})=[2+2i, 3i]$, the value of $\alpha$ has to be chosen within the region of the black octagon in the centre.  
\par
\begin{algorithm}
\caption{Complex-Exhaustive-II Algorithm}
\begin{algorithmic}[1]
\REQUIRE channel vector $\mathbf{h}=[h_1,h_2,\cdots,h_L]\in\mathbb{C}^{L}$, SNR, integer domain $\mathbb{A}$ ($\mathbb{Z}[i]$,$\mathbb{Z}[\omega]$, etc) with basis $\mathbf{B}_{\mathbb{A}}$ 
\ENSURE optimal coefficient vector $\mathbf{a}_{\mathrm{opt}}$ \\*
$\mathbf{Phase\ 1}$: obtain the representatives of $\alpha$, stored in set $\mathcal{S}$. The initial $\mathcal{S}=\emptyset$
\STATE calculate the range of $\alpha$ according to Proposition. 1
\FOR {$l=1:L$}
\STATE find all lattice points generated by $\frac{1}{h_l}\mathbf{B}_{\mathbb{A}}$ over the range obtained in step.1. The acquired lattice points are stored in $\Omega_{l}=\{\alpha_{l,1}^{*},\alpha_{l,2}^{*},\cdots,\alpha_{l,K_{l}}^{*}\}$
\FOR {$k=1:K_{l}$}
\STATE find the vertices of the corresponding $\upsilon_{l,a_{l}}$ with $a_{l}=\alpha_{l,k}^{*}{h_{l}}$, calculated by $\alpha_{l,k}^{*}+\frac{1}{h_l}\frac{z_{0}}{2}$\\*
$z_{0}=\pm{1}\pm{i}$ for $\mathbb{Z}[i]$\\*
$z_{0}=\pm{1}\pm{\frac{\sqrt{3}}{3}}i,\ \pm\frac{2\sqrt{3}}{3}i$ for $\mathbb{Z}[\omega]$
\STATE store these vertices into set $\mathcal{S}_{l}$
\STATE calculate the linear equation of each edge of $\upsilon_{l}$, save them into set $\Psi_{l}$
\ENDFOR
\STATE $\mathcal{S}=\mathcal{S}\cup\mathcal{S}_{l}$
\ENDFOR
\FOR {$\bar{l}=1:L-1$}
\FOR {$\hat{l}=\bar{l}+1:L$}
\STATE find all combinations of $\{c_1,c_2\}$, with $c_1\in\Psi_{\bar{l}}$ and $c_2\in\Psi_{\hat{l}}$. Calculate the crossing point of $c_1$ and $c_2$: the crossing points which are not in the valid range of $\alpha$ should be discarded. Store the remaining in set $\mathcal{S}_{\bar{l},\hat{l}}$ 
\STATE $\mathcal{S}_{\bar{l}}=\mathcal{S}_{\bar{l}}\cup\mathcal{S}_{\bar{l},\hat{l}}$
\ENDFOR
\STATE $\mathcal{S}=\mathcal{S}\cup\mathcal{S}_{\bar{l}}$
\ENDFOR
\\*
$\mathbf{Phase \ 2}$ select the optimal integer vector 
\FOR {all representative $\alpha$ in $\mathcal{S}$}
\STATE acquire candidate of $\mathbf{a}$ by $\mathcal{Q}_{\mathbb{A}}^{*}(\alpha\mathbf{h})$, discard the repeated outputs.
\STATE calculate $\mathcal{R}(\mathbf{h},\mathbf{a})$ by equation (5)
\ENDFOR
\STATE Return $\mathbf{a}_{\mathrm{opt}}=\argmax\mathcal{R}(\mathbf{h},\mathbf{a})$
\end{algorithmic}
\end{algorithm}

The exhaustive-II requires the selection of a representative $\alpha$ within each polygon to obtain the candidate set of $\mathbf{a}$. In the real channel case, each $\mathbf{a}$ corresponds to an one-dimensional interval, therefore we can simply choose the end point (which is the discontinuity of the function $f(\alpha)=\lfloor\alpha\mathbf{h}\rceil$) of each interval as the representative. However, in the complex channel case, the one-dimensional interval becomes a two-dimensional region, the discontinuities become the edges of the polygon. Hence the number of discontinuities becomes infinite. Now the vertices of each polygon are most easily calculated among all discontinuities: can we therefore use these vertices as the representatives? \par 
Assume $\alpha_{\mathrm{v}}$ is a vertex, clearly, $\alpha_{\mathrm{v}}$ is shared by its adjacent polygons. Hence the element $\alpha_{\mathrm{v}}\mathbf{h}$ is singular to the quantisation operation $\mathcal{Q}_{\mathbb{A}}(\cdot)$ (due to the fact that at least one of $\mathrm{Real}(\alpha_{\mathrm{v}}\mathbf{h})$ and $\mathrm{Imag}(\alpha_{\mathrm{v}}\mathbf{h})$ is precisely a half integer). In the real valued channel with $\mathbb{Z}$-lattice, the singular quantisation is not a problem. Each interval has two ends, and hence if $\mathcal{Q}_{\mathbb{Z}}(\alpha\mathbf{h})$ is open at one end, then it has to be closed at the other end as long as $\mathcal{Q}_{\mathbb{Z}}$ rounds $\alpha\mathbf{h}$ in the same direction at both ends. This is because each interval has redundancy (2 ends) to compensate the quantisation uncertainty (2 possibilities: round up or down), and they are balanced for all intervals. However, for the complex channel, the redundancy and the quantisation uncertainty are not always balanced. Take the $\mathbb{Z}[i]$-lattice for example: each $\mathcal{Q}_{\mathbb{Z}[i]}(\alpha_{\mathrm{v}}\mathbf{h})$ has four possible values, while the number of vertices of each polygon is uncertain. Particularly, for the triangle regions, the redundancy (3 vertices) is apparently not able to compensate the quantisation uncertainty. This means if we set the quantiser to round $\alpha_{\mathrm{v}}\mathbf{h}$ in a specific direction for all vertices, we might miss that triangle polygon. Hence, we propose a ``full direction" quantiser $\mathcal{Q}_{\mathbb{A}}^{*}(\cdot)$ to replace $\mathcal{Q}_{\mathbb{A}}(\cdot)$. $\mathcal{Q}_{\mathbb{A}}^{*}(\cdot)$ returns all equal likely $\mathbf{a}$. For example, $\mathcal{Q}_{\mathbb{Z}[i]}^{*}(0.5+1.5i)=\{1+2i,1+1i,0+2i,0+1i\}$. The modified quantiser ensures there exists at least one representative within each polygon. In the next section we will see this modification only increases the complexity slightly.\par  
   
The only issue remaining is to calculate the coordinates of the vertices. Clearly, each vertex is a crossing point of two lines, and each line is exactly an edge of $\upsilon_{l,a_{l}}$. Since all $\upsilon_{l,a_{l}}$ have regular shapes, hence it is easy to acquire the function of each edge according to the coordinates of the centre point. The centre points are represented by the dots in Fig. 2, they are exactly the lattices points generated by the basis of $\frac{1}{h_{l}}\mathbf{B}_{\mathbb{A}}$, where $\mathbf{B}_{\mathbb{A}}$ is the basis matrix\footnote{The basis of $\mathbb{Z}[i]$ and $\mathbb{Z}[\omega]$ are $\mathbf{B}_{\mathbb{Z}[i]}=[1\ 0;0\ 1]$ and $\mathbf{B}_{\mathbb{Z}[w]}=[1\ 0;0\ \omega]$ respectively } of $\mathbb{A}$. Since the range of $\alpha$ is given at the beginning of this section, the centre points can be easily obtained. The whole procedure of the Complex-Exhaustive-II is summarised in Algorithm. 1.

\subsection{Complexity of Complex-Exhaustive-II Algorithm}

The complexity mainly depends on the number of candidates $\mathbf{a}$, and this number is upper bounded by the outputs of $\mathcal{Q}_{\mathbb{A}}^{*}(\alpha\mathbf{h})$ for all $\alpha$ in $\mathcal{S}$ (step 18-19 in Algorithm 1). Since the number of quantiser outputs for each $\alpha$, denoted as $\xi$ is a constant ($\xi=4$ for $\mathbb{Z}[i]$, $\xi=4$ or  6 for $\mathbb{Z}[\omega]$)\footnote{In principle, the possibility that more than 2 lines intersect at the same point is infinitesimal.}, hence the number of candidates $\mathbf{a}$ is bounded by $\xi{|\mathcal{S}|}$, where $|\mathcal{\cdot}|$ denotes the cardinality of a set. The $\alpha$ in $\mathcal{S}$ can be divided into 2 sets:
\begin{itemize}
\item $\mathcal{S}$-I: vertices of individual $\upsilon_{l,a_{l}}$ (step 2-10 in Algorithm 1).
\item $\mathcal{S}$-II: intersections of two sets of parallel lines, where one set belongs to $\Psi_{\bar{l}}$ and the other belongs to $\Psi_{\hat{l},\hat{l}\neq{\bar{l}}}$ (step 11-17 in Algorithm 1).  
\end{itemize} 
Intuitively, the former indicates the vertices of the red/blue squares in Fig.2, while the latter indicates the vertices of the parallelograms in Fig. 2 (labelled by the black shading). Since the area of the valid range of $\alpha$ is bounded by SNR, the total number of $\upsilon_{l,a_{l}}$ for all $l$ is therefore expected to be
\begin{equation}
\sum_{l}\mathbb{E}\Big[\frac{\mathrm{SNR}}{\mathcal{A}_{\upsilon_{l}}}\Big]=\sum_{l}\frac{\mathrm{SNR}\mathbb{E}[\|h_{l}\|^2]}{\mathcal{A}_{\upsilon_{0}}}=\frac{\mathrm{SNR}L}{\mathcal{A}_{\upsilon_{0}}},
\end{equation} 
where $\mathcal{A}_{\upsilon_{0}}$ is a constant as described previously, and the second equality is due to the assumption of $h_l\sim\mathcal{CN}(0,1)$. \par
For each pair of sets of parallel lines from $\Psi_{\bar{l}}$ and $\Psi_{\hat{l},\hat{l}\neq{\bar{l}}}$, the expected number of parallelograms is 
\begin{align*}
\mathbb{E}\Big[\frac{\mathrm{SNR}}{\mathcal{A}_{para}}\Big]&=\frac{\mathrm{SNR}\mathbb{E}[|h_{\bar{l}}||h_{\hat{l}}|\mathrm{sin}(\theta_{\bar{l},\hat{l}})]}{\mathcal{A}_{\upsilon_{0}}}\numberthis\\&=\frac{\mathrm{SNR}\mathbb{E}[|h_{\bar{l}}|]\mathbb{E}[|h_{\hat{l}}|]\mathbb{E}[\mathrm{sin}(\theta_{\bar{l},\hat{l}})]}{\mathcal{A}_{\upsilon_{0}}}\numberthis\\&=0.5\frac{\mathrm{SNR}}{\mathcal{A}_{\upsilon_{0}}}\numberthis.
\end{align*} 
Here $\mathcal{A}_{para}$ denotes the area of the parallelograms, and $\theta_{\bar{l},\hat{l}}$ denotes the intersection angle of the two sets of lines which is randomly distributed within $0\sim\frac{\pi}{2}$, hence $\mathbb{E}[\mathrm{sin}(\theta_{\bar{l},\hat{l}})]=\frac{2}{\pi}$. The expression (14) comes from the independence of the variables. Since the expected value of $|h|$ equals $\sqrt{\frac{\pi}{4}}$ with $h\sim\mathcal{CN}(0,1)$, the simplified expression is therefore written as (15). Since there are respectively 2 (3) sets of parallel lines for $\mathbb{Z}[i]$ ($\mathbb{Z}[\omega]$) in each $\Psi_{l}$, the total number of the parallelograms is therefore expected to be
\begin{equation}
\binom{L}{2}\binom{2}{1}\binom{2}{1}\mathbb{E}\Big[\frac{\mathrm{SNR}}{\mathcal{A}_{para}}\Big] \ \mathrm{and}\ \binom{L}{2}\binom{3}{1}\binom{3}{1}\mathbb{E}\Big[\frac{\mathrm{SNR}}{\mathcal{A}_{para}}\Big]
\end{equation} for $\mathbb{Z}[i]$ and $\mathbb{Z}[\omega]$ respectively. The expressions (12) and (16) also respectively represent the expected values of $|\mathcal{S}$-I$|$ and $|\mathcal{S}$-II$|$. Since the total number of candidates is $\xi(|\mathcal{S}$-I$|+|\mathcal{S}$-II$|)$ and the computation rate can be calculated in $\mathcal{O}(L)$ for each candidates, hence the overall time complexity can be expressed as
\begin{equation}
\mathcal{O}(\mathrm{SNR}L^{2}(L-1))+\mathcal{O}(\mathrm{SNR}L^2).
\end{equation} 
Note that the constant components are omitted in (17), and their effect will be evaluated numerically in section V. 

\subsection{L-L Algorithm vs Complex-Exhaustive-II Algorithm}
The L-L algorithm in \cite{Ling.2016} is described as an optimal deterministic algorithm.  Actually, it does not ensure the optimal solution for all channel realisations. In this section, we will present an example to compare the L-L algorithm and our proposed complex exhaustive-II algorithm. \par 

The main difference between these two algorithms is the elements of representative $\alpha$. The exhaustive-II algorithm considers both $\mathcal{S}$-I and $\mathcal{S}$-II, while the L-L algorithm considers the individual $\upsilon_{l,a_{l}}$ only. Specifically, the vertices and the midpoints of sides of individual $\upsilon_{l,a_{l}}$ are considered for L-L, hence the representatives of $\alpha$ can be regarded as an extended version of $\mathcal{S}$-I (though the L-L algorithm is not interpreted in such a manner in \cite{Ling.2016}). Fig.3 illustrates an intuitive comparison of these two algorithms. A $\mathbb{Z}[i]$-lattice based system is considered, with $L=5$ and SNR = 10dB. The channel components $h_{l}$ and their corresponding $\upsilon_{l,a_{l}}$ are denoted by different colours. The representatives of $\alpha$ utilised in L-L are marked by the black dots, which result in $\mathbf{a_{\mathrm{opt}}}=[1i,-1i,1,-1,-1]$ and $\mathcal{R}(\mathbf{a}_{\mathrm{opt}},\mathbf{h})=0.585$. However, the actually optimal solution is $\mathbf{a_{\mathrm{opt}}}=[1i,1i,1,-1,-1]$ with $\mathcal{R}(\mathbf{a}_{\mathrm{opt}},\mathbf{h})=0.702$. The corresponding optimal Voronoi region is the blue solid polygon (labelled as Exhaustive-II) which is generated by the points marked with circles from the set $\mathcal{S}$-II. Since none of the black dots are located within this region, hence $\mathbf{a}_{\mathrm{opt}}$ is missed by the L-L algorithm.         
\begin{center}
\begin{figure}[h]
\centerline{\includegraphics[width=0.85\linewidth]{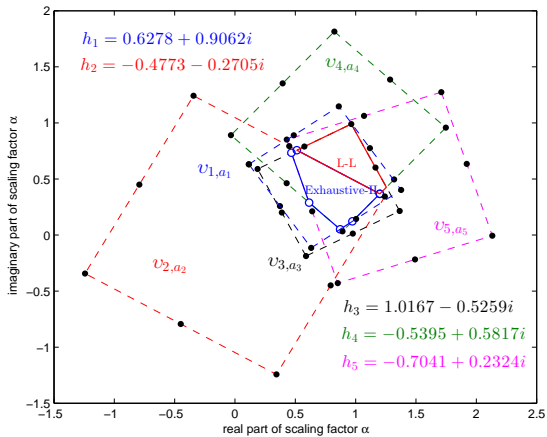}}
\caption{An example of a 5-user system, with SNR = 10dB}
\label{fig:buffer1}
\end{figure}
\end{center}

\section{Linear Search Algorithm}
In this section, we propose a simplified approach using a linear search algorithm to reduce the complexity. It maintains almost the same performance as the exhaustive search does. The complexity reduction comes from the following aspects.
\begin{itemize}
\item The exhaustive method in section III requires the calculation of the vertices of all irregular polygons in order to obtain a complete candidate set. In this section we simply employ some sampled values of $\alpha$ to acquire the candidates.
\item We set a low sampling rate (or large step size) to ignore the ``unnecessary candidates", the step size can be drawn from an off-line acquired table.
\item We set a break condition for the online search.   
\end{itemize}       

\subsection{Off-line Search: Obtain The Optimal Step Size}   
The corresponding polygons of $\mathbf{a}$ are uniformly distributed over the range of $\alpha$ with random sizes. Hence we utilise the simplest uniform sampler to generate $\alpha$ as
\begin{equation}
\alpha_{\mathrm{sample}}=\Delta(k_1+k_2i),\ k_1,k_2\in\mathbb{Z},
\end{equation}
where the positive real number $\Delta$ denotes the step size which controls the sampling rate. The key factor is to choose a proper step size.

\begin{center}
\begin{figure}[h]
\centerline{\includegraphics[width=0.95\linewidth]{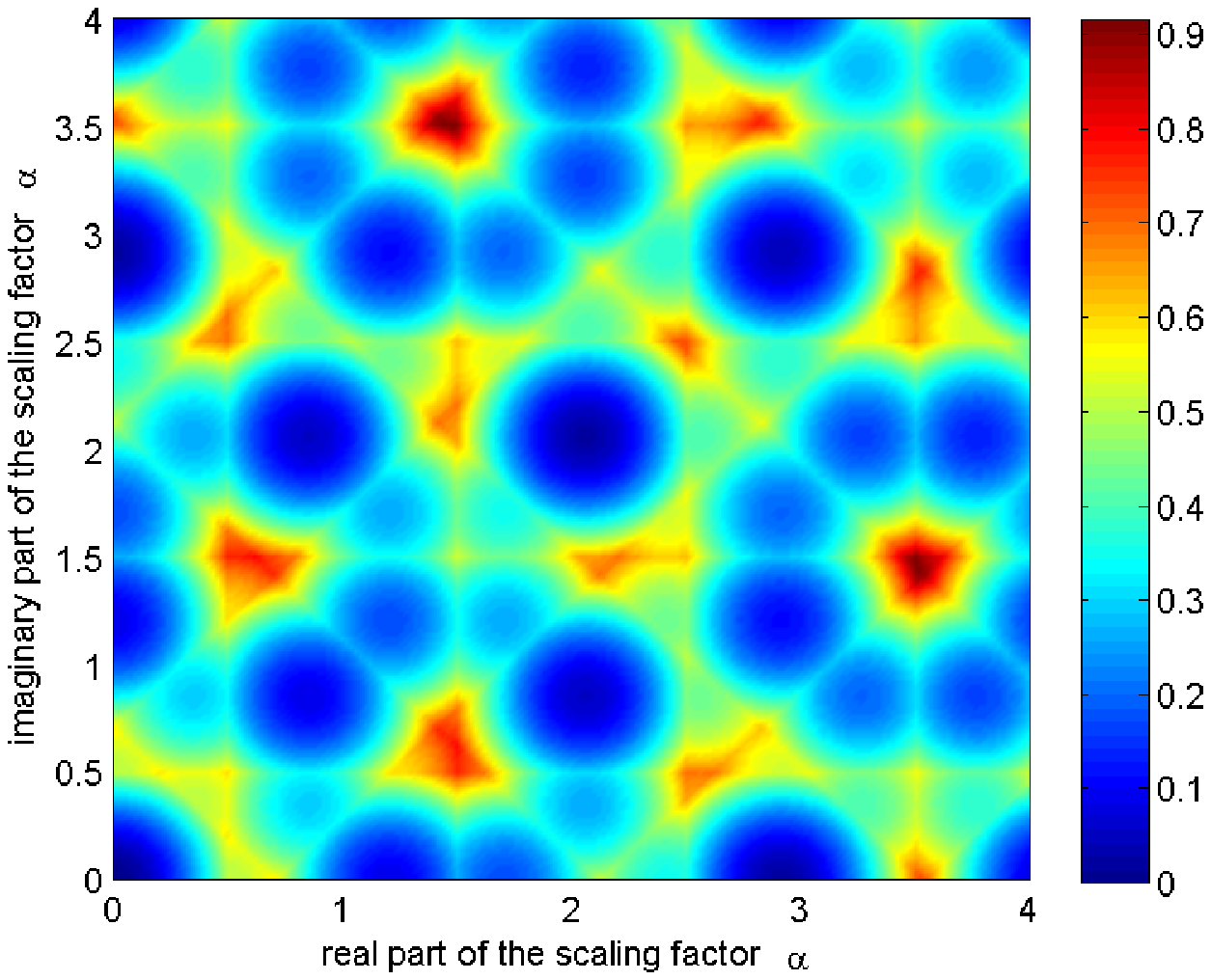}}
\caption{$\sigma^2_{\mathrm{eff}}(\alpha)$ of example.1 with $\mathrm{SNR=30dB}$}
\label{fig:buffer1}
\end{figure}
\end{center}

Fig. 4 gives an intuitive view of determining the step size. We adopt the same channel and axis labelling as in Fig. 2. Again the x-axis and y-axis denote the real and imaginary parts of $\alpha$ respectively, and the corresponding effective noise calculated by 
\begin{equation}
\sigma^2_{\mathrm{eff}}(\alpha)=\|\alpha\|^2\sigma^2+P\|\alpha\mathbf{h}-\mathcal{Q}_{\mathbb{Z}[i]}(\alpha\mathbf{h})\|^2
\end{equation} 
is shown in the colour bar. The 1st order derivative of (19) is expressed as
\begin{equation}
\frac{d\sigma^2_{\mathrm{eff}}}{d\alpha}=2\alpha\sigma^2+2P\alpha\|\mathbf{h}\|^2-2\mathbf{h}^{H}\mathcal{Q}_{\mathbb{Z}[i]}(\alpha\mathbf{h}).
\end{equation}
Since $\mathcal{Q}_{\mathbb{Z}[i]}(\alpha\mathbf{h})$ is invariant within each polygon, the 2nd order derivative is therefore expressed as 
\begin{equation}
\frac{d^2\sigma^2}{d\alpha^2}=2\sigma^2+2P\|\mathbf{h}\|^2\geq{0}.
\end{equation}
Clearly, there is a local minimum within each polygon since $\sigma^2_{\mathrm{eff}}(\alpha)$ is convex. More importantly, the 2nd derivative is the same for all candidate vectors, which means the global minimum is more likely to be located in one of the larger polygons. As shown in Fig. 4, the dark blue regions correspond to the large polygons in Fig. 2. Their corresponding $\mathbf{a}$ can be regarded as ``necessary candidates" since they have lower effective noise. 

\begin{center}
\begin{figure} [h]
\centerline{\includegraphics[width=0.8\linewidth]{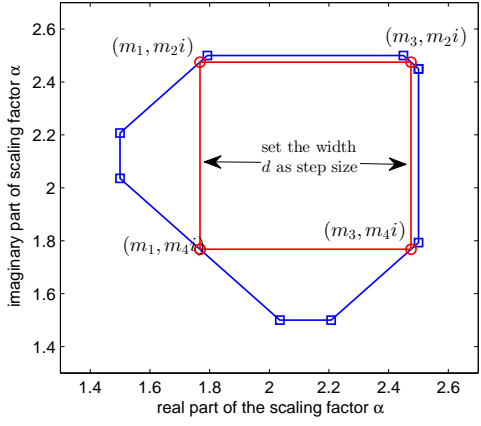}}
\caption{Finding the step size given the optimal Voronoi region}
\label{fig:buffer1}
\end{figure}
\end{center}

\par 
Let $\upsilon_{\mathrm{opt}}$ denote the corresponding Voronoi region of $\mathbf{a}_{\mathrm{opt}}$, and $\upsilon_{\mathrm{opt}}$ has $g$ edges. Actually the $\upsilon_{\mathrm{opt}}$ of the example above corresponds to the black octagon labelled in Fig. 2. Assume the largest square (with all sides vertical or horizontal) that fits in $\upsilon_{\mathrm{opt}}$ has width $d_{\mathrm{opt}}$, as shown in Fig. 5. The region $\upsilon_{\mathrm{opt}}$ will definitely be visited if $\Delta\leq{d_{\mathrm{opt}}}$. Finding the largest square in $\upsilon_{\mathrm{opt}}$ is a convex optimisation problem described as:
\begin{equation*}
\begin{aligned}
& \underset{\mathbf{m}}{\text{maximise}}
& &  \mathbf{m}^{T}\mathbf{Q}\mathbf{m} \\
& \text{subject to}
& & \mathbf{A}\mathbf{m} \leq \mathbf{b} \\
& & & \text{and} \  m_1+m_2=m_3+m_4,
\end{aligned}
\end{equation*}
where 
$\mathbf{Q}=\begin{bmatrix}
0 &1 &0 &-1\\[-0.2em]
1 &0 &-1 &0\\[-0.2em]
0 &-1 &0 &1\\[-0.2em]
-1 &0 &1 &0\\[-0.2em]
\end{bmatrix}$, and the vertices of the square are denoted by $\mathbf{m}=[m_1,m_2,m_3,m_4]^{T}$, as labelled in Fig. 5. The restriction $\mathbf{A}\mathbf{m} \leq \mathbf{b}$ comprises 4$g$ linear equations which corresponding to the condition that the 4 vertices of the square should be located within the $g$-edge convex polygon. Such an optimisation problem is linearly solvable, with complexity $\mathcal{O}(g)$.  
\begin{center}
\begin{figure} [h]
\centerline{\includegraphics[width=0.9\linewidth]{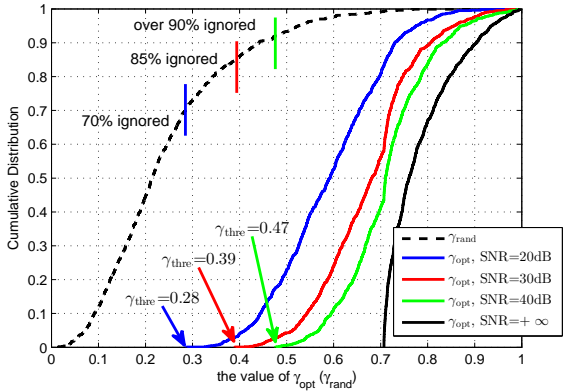}}
\caption{The cumulative distributions of $\gamma_{\mathrm{opt}}$ and $\gamma_{\mathrm{rand}}$ }
\end{figure}
\end{center}
\par
Given SNR and $L$, the optimal $\Delta$ can be determined by exploiting the statistical characteristic of $d_{\mathrm{opt}}$. As discussed in section III, the area of $\upsilon_{\mathrm{opt}}$ is upper bounded by $\mathcal{A}_{\upsilon}=\mathcal{A}_{\upsilon_{0}}/\|h_{\mathrm{max}}\|^2$ for a particular $\mathbf{h}$, hence we define the normalised $d_{\mathrm{opt}}$ as $\gamma_{\mathrm{opt}}=d_{\mathrm{opt}}/\sqrt{\mathcal{A}_{\upsilon}}$. We use $\upsilon_{\mathrm{rand}}$ to denote a random polygon within the range of $\alpha$, and $d_{\mathrm{rand}}$ denotes the width of the  largest square that fits in $\upsilon_{\mathrm{rand}}$. Similarly $\gamma_{\mathrm{rand}}=d_{\mathrm{rand}}/\sqrt{\mathcal{A}_{\upsilon}}$. It is obvious that $0<{\gamma_{\mathrm{opt}}(\gamma_{\mathrm{rand}})}\leq{1}$. Fig. 6 illustrates the cumulative distribution of $\gamma_{\mathrm{opt}}$ in a 5-user, $\mathbb{Z}[i]$-lattice based system. The results are acquired over 1000 channel realisations. The blue, red and green lines represent the scenarios of SNR=20dB, 30dB and 40dB respectively. It can be seen that $\gamma_{\mathrm{opt}}\geq{0.28}$ for all channel trails when SNR=20dB, hence we can set $\gamma_{\mathrm{thre}}=0.28$ as the threshold to distinguish the necessary and unnecessary candidates, and therefore set 
\begin{equation}
\Delta=\gamma_{\mathrm{thre}}\sqrt{\mathcal{A}_{\upsilon_{v}}}=\gamma_{\mathrm{thre}}\sqrt{\frac{\mathcal{A}_{\upsilon_{0}}}{\|h_{\mathrm{max}}\|^2}}
\end{equation} 
to capture the necessary candidates. Similarly, $\gamma_{\mathrm{thre}}=0.39$ and $\gamma_{\mathrm{thre}}=0.47$ can be assigned to SNR=30dB and 40dB respectively. We also investigated the cumulative distribution of $\gamma_{\mathrm{rand}}$, which reveals the potential complexity reduction compared to the exhaustive-II algorithm. There are over 70\% candidates whose corresponding $d\leq{0.47}$ with SNR=20dB. That means that 70\% of the candidates examined in the exhaustive search are ignored by the sampled values, hence the complexity potentially reduces by 70\% in this case\footnote{Some of these ignored candidates might still be visited by the sampled values, hence we call it potential complexity reduction.}. The corresponding $\gamma_{\mathrm{thre}}$ for SNR=30dB and 40dB indicate that the higher SNR, the more complexity reduction.  \par 

It can be observed that $\gamma_{\mathrm{thre}}$ increases monotonically with SNR, for convenience, the $\gamma_{\mathrm{thre}}$ obtained for SNR=20dB can be used in the region of 20dB$\leq\Delta<$25dB, this does not change the accuracy. Additionally, the threshold of $\gamma_{\mathrm{opt}}$ only depends on SNR and $L$, and not on any particular instance of the channel, hence an off-line table of $\Theta$ can be established to store the values of $\gamma_{\mathrm{thre}}$ corresponding to each $L$ and SNR region, which does not bring extra complexity to the online search. 
\subsection{Online Search: Obtain The Optimal Integer Vector}
Based on the table of $\Theta$, the step size for a particular channel $\mathbf{h}$ can be calculated by (22). The task of the online search is to check the candidates $\mathbf{a}=\mathcal{Q}_{\mathbb{A}}(\alpha_{\mathrm{sample}}\mathrm{h})$ one by one, and select $\alpha_{\mathrm{opt}}$. We perform the following processes to make it more efficient. 
\begin{itemize}
\item the values of $\alpha_{\mathrm{sample}}$ are sorted in ascending order of amplitude. Note that the step size $\Delta$ changes the scale in (18) only, hence the order of $\alpha_{\mathrm{sample}}$ is invariant for different $\mathbf{h}$, and no extra complexity is required.  
\item we set a break condition as follows: the search terminates when the scaled Gaussian noise ($\sigma_{\mathrm{sg}}^2=\|\alpha\|^2\sigma^2$) of the current sample is already greater than the minimum effective noise obtained from the preceding samples (It is impossible to find better $\alpha$ with larger amplitude even it brings no self noise at all). 
\end{itemize}

\begin{center}
\begin{table*}[t]
\caption{A partial table of $\Theta$ with $L=5, 8, 10$} 
\centering 
\begin{tabular}{c|c|c c c c c c c c c c|} 
\cline{2-12}
& \ 
\backslashbox{$L$}{SNR in dB} & $<$5 & [5 10) & [10 15) &[15 20)  &[20 25) &[25 30) &[30 35) &[35 40) &$\cdots$ &$+\infty$  \\ [0.3ex] 
\hline
\multicolumn{1}{ |c  }{\multirow{3}{*}{$\mathbb{Z}[i]$} } &
\multicolumn{1}{ |c| }{5} & $E$ & 0.09 & 0.12 & 0.21 &0.28 &0.33 &0.39  &0.44  &$\cdots$ &0.71 \\ \cline{2-12}
\multicolumn{1}{ |c  }{}                        &
\multicolumn{1}{ |c| }{8} & $E$ & 0.05 & 0.07 & 0.13 & 0.16 &0.25 &0.32 &0.38 &$\cdots$ &0.71 \\ \cline{2-12}
\multicolumn{1}{ |c  }{}                        &
\multicolumn{1}{ |c| }{10} & $E$ & 0.05 & 0.06 & 0.10 & 0.12 &0.17 &0.22 &0.29 &$\cdots$  &0.71    \\ \cline{1-12}
\multicolumn{1}{ |c  }{\multirow{3}{*}{$\mathbb{Z}[\omega]$} } &
\multicolumn{1}{ |c| }{5} & $E$ & 0.10 & 0.12 & 0.20 &0.29 &0.33 &0.40  &0.44  &$\cdots$ &0.71 \\ \cline{2-12}
\multicolumn{1}{ |c  }{}                        &
\multicolumn{1}{ |c| }{8} & $E$ & 0.05 & 0.08 & 0.11 & 0.16 &0.24 &0.32 &0.37 &$\cdots$ &0.71    \\ \cline{2-12}
\multicolumn{1}{ |c  }{}                        &
\multicolumn{1}{ |c| }{10} & $E$ & 0.05 & 0.06 & 0.09 & 0.13 &0.18 &0.23 &0.28 &$\cdots$  &0.71    \\ \cline{1-12}
\end{tabular}
\label{table:nonlin} 
\end{table*}
\end{center}

\begin{algorithm}
\caption{Linear search algorithm}
\begin{algorithmic}[1] 
\ENSURE optimal coefficient vector $\mathbf{a}_{\mathrm{opt}}$ \\*
$\mathbf{Offline\ Search}$: obtain table of $\Theta$ \\*
Given particular $L$ and $\mathrm{SNR}$
\FOR {$trail=1:1000$}   
\STATE generate $\mathbf{h}_{trail}\in\mathbb{C}^{L}$ 
\STATE obtain
$\mathbf{a}_{\mathrm{opt},trail}=\argmax\mathcal{R}(\mathbf{h}_{trail},\mathbf{a})$ by exhaustive-II, and acquire its corresponding $\upsilon_{\mathrm{opt},trail}$
\STATE calculate the normalised width $\gamma_{\mathrm{opt},trail}$ for $\upsilon_{\mathrm{opt},trail}$\\*
\ENDFOR 
\STATE set $\min_{trail=1}^{1000}\gamma_{\mathrm{opt},trail}\to\gamma_{\mathrm{thre}}(L,\mathrm{SNR})$\\
$\mathbf{Online\ Search}$: obtain $\mathbf{a}_{\mathrm{opt}}$ for a given $\mathbf{h}$\\*
\STATE $\Delta=\gamma_{\mathrm{thre}}\sqrt{\frac{\mathcal{A}_{\upsilon_{0}}}{\|h_{\mathrm{max}}\|^2}}$  \ (Eq.22)\ \ \  generate $\alpha_{\mathrm{sample}}$ in ascending order, denoted as $\alpha_{index}$ 
\STATE initialise $index=1,\ \sigma^2_{\mathrm{opt}}=\sigma_{\mathrm{eff}}^2(\alpha_{index})$ \ (Eq.19)\\ 
\STATE then $index=index+1,\ \sigma_{\mathrm{sg}}^2=\|\alpha_{index}\|^2\sigma^2$
\WHILE {$\sigma_{\mathrm{opt}}^2>\sigma_{\mathrm{sg}}^2$}
\IF {$\sigma_{\mathrm{eff}}^2(\alpha_{index})<\sigma_{\mathrm{opt}}^2$} 
\STATE $\alpha_{\mathrm{opt}}=\alpha_{index}$, $\sigma_{\mathrm{opt}}^2=\sigma_{\mathrm{eff}}^2(\alpha_{index})$
\ENDIF
\STATE $index=index+1$, $\sigma_{\mathrm{sg}}^2=\|\alpha_{index}\|^2\sigma^2$ 
\ENDWHILE

\STATE Return $\mathbf{a}_{\mathrm{opt}}=\mathcal{Q}_{\mathbb{A}}(\alpha_{\mathrm{opt}}\mathbf{h})$

\end{algorithmic}
\end{algorithm}
\subsection{Complexity of the Linear Search Algorithm}
The complexity of the linear search algorithm can be analysed from two perspectives. On the one hand, the proportion of candidates ignored is quite small ($\gamma_{\mathrm{thre}}\approx{0}$) in the low SNR region, and hence the complexity of the linear search can be measured by the exhaustive-II search. On the other hand, the number of candidates for the high SNR case can be expected to be
\begin{equation}
\frac{\mathrm{SNR}}{E[\Delta^2]}=\frac{\mathrm{SNR}}{\mathcal{A}_{\upsilon_{0}}}E[\frac{\|h_{\mathrm{max}}\|^2}{\gamma_{\mathrm{thre}}^2}]=\frac{\mathrm{SNR}}{\gamma_{\mathrm{thre}}^2\mathcal{A}_{\upsilon_{0}}}E[\|h_{\mathrm{max}}\|^2],
\end{equation}
where the first equality comes from (22), and the second is due to the fact that the threshold $\gamma_{\mathrm{thre}}$ tends to a constant in the high SNR region: when $\sigma^2\to{0}$, the optimal $\alpha$ is free to be chosen as the least common multiple of $\{\frac{1}{h_{l}},l=1:L\}$. In this case, the centre points of all individual $\upsilon_{l,a_{l}}$ (see Prop. 2) overlap, and hence the optimal Voronoi $\upsilon_{\mathrm{opt}}$ is very likely to be the smallest individual $\upsilon_{l}$. By employing the moment generating function of $\|h_{\mathrm{max}}\|^2$, we have
\begin{align*}
E[\|h_{\mathrm{max}}\|^2]&=\frac{1}{\beta}E[\mathrm{log}e^{\beta\|h_{\mathrm{max}}\|^2}]~~~(\beta>0)\numberthis\\&\leq\frac{1}{\beta}\mathrm{log}E[e^{\beta\|h_{\mathrm{max}}\|^2}]\numberthis\\&=\frac{1}{\beta}\mathrm{log}\int_{0}^{\infty}\mathrm{Pr}(e^{\beta\|h_{\mathrm{max}}\|^2}\geq{x})dx\numberthis\\&\leq\frac{1}{\beta}\mathrm{log}\int_{0}^{\infty}\sum_{l=1}^{L}\mathrm{Pr}(e^{\beta\|h_l\|^2}\geq{x})dx\numberthis\\&=\frac{1}{\beta}\mathrm{log}\sum_{l=1}^{L}E[e^{\beta\|h_{l}\|^2}]\numberthis\\&=\frac{1}{\beta}\mathrm{log}\frac{L}{1-2\beta}\numberthis
\end{align*}
where (25) comes from Jensen's inequality. (26) and (28) are based on the relation between the expected value and the survival function. (27) is obtained by the union bound and (29) is because the moment generating function of a chi-square variable $\|h_{l}\|^2$ is $\frac{1}{1-2\beta}$. Since (29) holds for any $\beta>0$, we can pick $\beta$ to tighten this bound. By employing the AM-GM\footnote{ Arithmetic Mean-Geometric Mean: $\frac{\sum_{i=1}^{n}a_{i}}{n}\geq(a_{1}a_{2}\cdots{a_{n}})^{1/n}$ for positive numbers $a_{i}$, the equality hold iff all the numbers are equal.} inequality (setting $\mathrm{log}L=\mathrm{log}\frac{1}{1-2\beta}$), we have $E[\|h_{\mathrm{max}}\|^2]\leq{\frac{4\mathrm{log}L}{1-\frac{1}{L}}}$. Again, the corresponding $\mathcal{R}(\mathbf{a,h})$ can be calculated in $\mathcal{O}(L)$, the time complexity for the high SNR can expressed as
\begin{equation}
\mathcal{O}(\mathrm{SNR}L\frac{\mathrm{log}{L}}{1-\frac{1}{L}})
\end{equation}  
with the constant components omitted.

\section{Numerical results}
In this section, we investigate both the computation rate and the complexity of our proposed algorithms, compared with the CLLL method \cite{CongLing.2009} and the L-L\cite{Ling.2016} algorithm. We consider two scenarios in which 5 and 10 users are employed respectively. All results are acquired over 10000 channel realisations.
\subsection{Computation Rate Comparison}
\begin{center}
\begin{figure}[h]
\centerline{\includegraphics[width=0.95\linewidth]{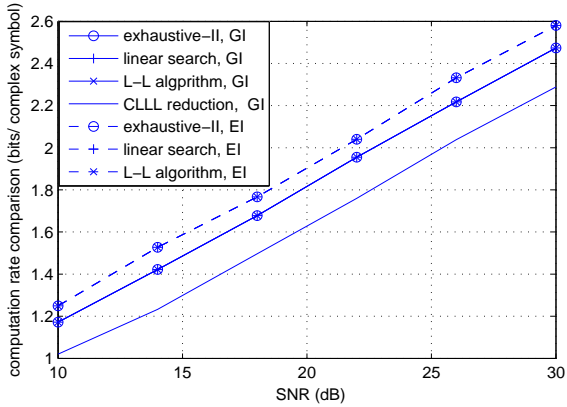}}
\caption{Average  $\mathcal{R}(\mathbf{h})$ comparison: 5 users}
\label{fig:buffer1}
\end{figure}
\end{center}
\par
Fig.7 shows the average $\mathcal{R}(\mathbf{h})$ of a 5 user scenario. We use solid and dashed lines to represent the case of $\mathbb{Z}[i]$ (denoted as GI) and $\mathbb{Z}[\omega]$ (denoted as EI) based lattices respectively. Unsurprisingly, the denser structure of $\mathbb{Z}[\omega]$ leads to a better performance than the $\mathbb{Z}[i]$ based lattice. Previously we have established that both the L-L algorithm and the linear search method might sometimes miss the optimal solution. However, the numerical results reveal that the probability of missing $\mathbf{a}_{\mathrm{opt}}$ is quite small. The gaps to the exhaustive-II algorithm are negligible for both algorithms, and they all outperform the CLLL method. Similarly, Fig. 8 reveals the rate comparison of a 10 user scenario. Compared to the case of $L=5$, the advantage of our proposed algorithms to the CLLL is increased.    
\begin{center}
\begin{figure}[h]
\centerline{\includegraphics[width=0.95\linewidth]{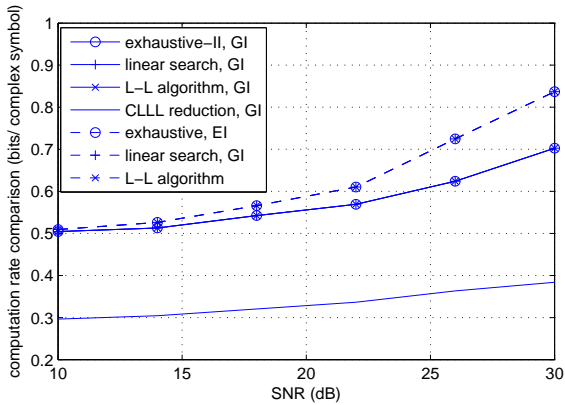}}
\caption{Average $\mathcal{R}(\mathbf{h})$ comparison: 10 users}
\label{fig:buffer1}
\end{figure}
\end{center}
\subsection{Complexity Comparison}  
In this section, we investigate the complexity by counting the floating point operations (flops). The number of flops required for each complex addition and multiplication are 2 and 6 respectively, and the round operations are ignored in the simulation. It suffices to consider $\mathbb{Z}[i]$ based lattice only (any other non-cubic lattices have a similar result). By considering $\mathbb{E}[\|\mathbf{h}\|^2]=L$, the complexity of the L-L algorithm in \cite{Ling.2016} can be rewritten as $\mathcal{O}\big(L^2(\mathrm{SNR}L+\sqrt{\mathrm{SNR}L}+2)\big)$. Compared with the expression of (17), we can see that the L-L algorithm and the exhaustive-II algorithm have almost the same theoretical complexity, both being dominated by $\mathcal{O}(L^3\mathrm{SNR})$. However, numerical results in Fig. 9 and Fig. 10 reveal that our proposed exhaustive-II algorithm has less complexity than the L-L algorithm. The reasons are as follows:
\begin{itemize}
\item the L-L algorithm considers the bound of candidate $\mathbf{a}$ as 
\begin{equation}
\|a_{l}\|\leq{\sqrt{1+\mathrm{SNR}\|\mathbf{h}\|^2}},
\end{equation} while our complex exhaustive-II considers 
\begin{equation}
\|a_{l}\|=\lfloor\alpha{h_{l}}\rceil\leq{\lfloor\sqrt{\mathrm{SNR}}h_{l}\rceil}.
\end{equation} Clearly, (32) gives a tighter bound than (31). For example, assume h = [0.3 0.4] and $\mathrm{SNR}=100$. By employing (31), we have $a_1,a_2\in[0,6]$, while (32) results in $a_1\in[0,3]$ and $a_2\in[0,4]$.
\item In section III-$C$, we have established that the $\mathcal{S}$-II set in the exhaustive-II is not considered in the L-L algorithm. However, many of the candidates $\mathbf{a}$ generated by $\lfloor\alpha\mathbf{h}\rceil,\alpha\in\mathcal{S}$-II are duplicates of the candidates generated from the set $\mathcal{S}$-I. These duplicates will not participate in the calculation of $\mathcal{R}(\mathbf{h,a})$. Hence the actual complexity of the exhaustive-II is slightly less than the expression of (17).   
\end{itemize}

\begin{center}
\begin{figure}[h]
\centerline{\includegraphics[width=0.95\linewidth]{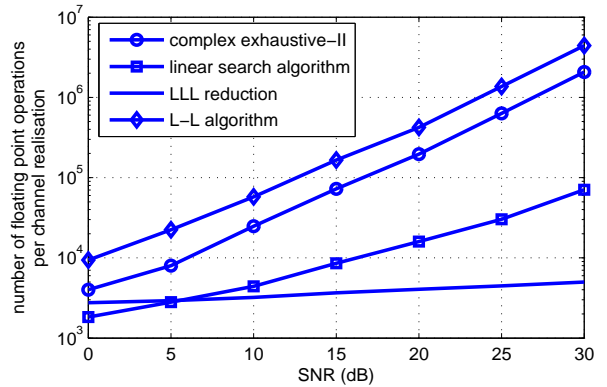}}
\caption{Average complexity comparison: 5 users}
\label{fig:buffer1}
\end{figure}
\end{center}
\par 
As we expected, the linear search has less complexity than the L-L and exhaustive-II. Since the the complexity of the linear search varies, the gap increases as the SNR increases. The comparison of the LLL and the other three is a tradeoff between $L$ and $\mathrm{SNR}$. In the high SNR region, the LLL algorithm has the complexity advantage while for a large number of users, our proposed algorithms have less complexity. 

\section{Concluding remarks}

In this paper, we have given two algorithms for coefficient selection in C\&F over complex integer based lattices. For the complex exhaustive search, we extended the idea of interval partition to Voronoi region partition to ensure the acquired coefficients are optimal. For the sub-optimal linear search algorithm, we established an off-line table to allocate the step size to eliminate unnecessary candidates. We have shown the theoretical complexity for both algorithms. Numerical comparisons with other existing algorithms are also given. We have shown both of our proposed approaches have good performance-complexity tradeoff. 
\begin{center}
\begin{figure}[h]
\centerline{\includegraphics[width=0.95\linewidth]{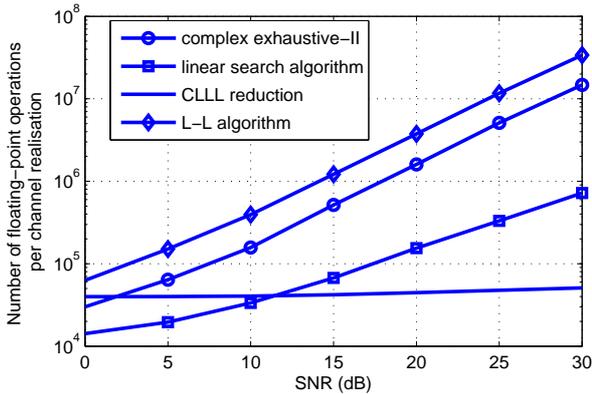}}
\caption{Average complexity comparison: 10 users}
\label{fig:buffer1}
\end{figure}
\end{center}
\bibliographystyle{IEEEtran}
\bibliography{ComplexSelection}
\end{document}